\newtheorem{theorem}{Theorem}[section]
\newtheorem{definition}[theorem]{Definition}
\newtheorem{assumption}[theorem]{Assumption}
\newtheorem{lemma}[theorem]{Lemma}
\newtheorem{remark}[theorem]{Remark}
\numberwithin{equation}{section}
\newcommand{\R}{\mathbb{R}}
\begin{document}

\title{\textsc{Sequential hypothesis testing in machine learning, and crude oil price jump size detection}}
\author{ Michael Roberts\footnote{Email: michael.roberts.1@ndus.edu}, Indranil SenGupta\footnote{Email: indranil.sengupta@ndsu.edu} \\ Department of Mathematics \\ North Dakota State University \\ Fargo, North Dakota, USA.}
\date{\today}
\maketitle

\begin{abstract}

In this paper, we present a sequential hypothesis test for the detection of the distribution of jump size in L\'evy processes. Infinitesimal generators for the corresponding log-likelihood ratios are presented and analyzed. Bounds for infinitesimal generators in terms of super-solutions and sub-solutions are computed. This is shown to be implementable in relation to various classification problems for a crude oil price data set. Machine and deep learning algorithms are implemented to extract a specific deterministic component from the data set, and the deterministic component is implemented to improve the Barndorff-Nielsen  \& Shephard  model, a commonly used stochastic model for derivative and commodity market analysis.

\end{abstract}

\textsc{Key Words:}  L\'evy processes, Hypothesis test, Machine learning, Crude oil price, Barndorff-Nielsen \& Shephard model. 


\section{Introduction}
\label{ch:newintro}

Various existing hedging algorithms and insurance risks depend on the underlying statistical model of the commodity market. Consequently, an improvement in the underlying model directly improves the hedging strategies and the understanding of insurance risks. In this paper,  we develop a novel statistical methodology for the refinement of stochastic models using various machine and deep learning algorithms.

As availability of information to the public through alternative data sources increases, machine learning is necessary for adequate analysis.  Currently, 97\% of North American businesses are using machine learning capabilities to analyze and apply data sources to their trading platforms and analytic focused activities (see \cite{refinitiv}).  The advent of these technologies allows participants to train, test, and project models using data that have historically been inaccessible.  ``Any innovation that makes better use of data, and enables data scientists to combine disparate sources of data in a meaningful fashion, offers the potential to gain competitive advantage"  (see \cite{refinitiv}). Trading capabilities, scale, scope, and speeds have increased exponentially with advancements in applications of Artificial Intelligence and Algorithmic trading.

A commonly used stochastic model for derivative and commodity market analysis is the Barndorff-Nielsen \& Shephard (BN-S) model (see \cite{BN1, BN-S1, BN-S2, BJS, Semere, Issaka, tatj}). In \cite{SWW}, the BN-S model is implemented to find an optimal hedging strategy for the oil commodity from the Bakken, a new region of oil extraction that is benefiting from fracking technology. In \cite{Wil2}, the BN-S model is used in this way, in the presence of quantity risk for oil produced in that region. 
In the recent paper \cite{recent}, a machine learning-based improvement of the BN-S model is proposed. It is shown that this refined BN-S model is more efficient and has fewer parameters than other models which are used in practice as improvements of the BN-S model.  Machine learning-based techniques are implemented for extracting a \emph{deterministic component}  ($\theta$) out of processes that are usually considered to be completely stochastic. Equipped with the aforementioned $\theta$, the obtained refined BN-S stochastic model can be implemented to incorporate \emph{long-range dependence} without actually changing the model. 

It is clear that the real challenge is to obtain an estimation of the value of the \emph{deterministic component} for an empirical data set. In \cite{recent}, a na\"ive way to find this value for crude oil price is proposed. The method proposed in that paper provides an algorithm to form a classification problem for the data set. After that, various machine and deep learning techniques are implemented for that classification problem. 

The primary motivation for this paper is the fact that the refined BN-S model can be successfully implemented to the analysis of crude oil price. In addition, it seems reasonable that some parameters of the refined BN-S model can be estimated by using various machine/deep learning algorithms. Consequently, it opens up the scope of an abundance of financial applications of the model to the commodity markets. With this motivation, in this paper, we investigate the problem from the perspective of sequential hypothesis testing. As described in \cite{Weld}, a sequential test of a hypothesis means any statistical test that gives a specific rule, at any stage of the experiment, for making one of the three decisions: (1) to accept the null hypothesis $H_0$, (2) to reject $H_0$, (3) to continue the experiment by making additional observation. Sequential hypothesis testing has many applications (see \cite{Baum, Brodsky, Dayanik, Golubev}). In the paper \cite{O}, the problem of testing four hypotheses on two streams of observations is examined.  A minimization result is obtained for the sampling time subject to error probabilities for distinguishing sequentially a standard versus a drifted two-dimensional Brownian motion. This result is further generalized in \cite{Roberts}, where the testing of four hypotheses on two streams of observations that are driven by L\'evy processes is presented. Consequently, the results in \cite{Roberts} are applicable for sequential decision making on the state of two-sensor systems. In one case, each sensor receives a L\'evy process with a drift term or no drift term. For the other case, each sensor receives data driven by L\'evy processes with large or small jumps. 
In this paper we show that a sequential test of a hypothesis can be implemented in relation to various classification problems for an empirical data set. Subsequently, machine and deep learning algorithms can be implemented to extract a  \emph{deterministic component} from a financial data set.

The organization of the paper is as follows. In Section \ref{sec12}, a refined BN-S model with some of its properties is presented. In Section \ref{ch:generalization}, we provide a general jump size detection analysis based on the sequential testing of hypotheses. In Section \ref{dataanalysissetup}, an overview of the data set is provided, and then two procedures in the predictive classification problem are introduced. Numerical results are shown in Section \ref{numericalresults}, and finally, a brief conclusion is provided in Section \ref{conclusion}. 

\section{A refined Barndorff-Nielsen \& Shephard model}
\label{sec12}

Many models in recent literature try to capture the stochastic behavior of time series. As an example, for the Barndorff-Nielsen \& Shephard model (BN-S model), the stock or commodity with price $S= (S_t)_{t \geq 0}$ on some filtered probability space $(\Omega, \mathcal{G}, (\mathcal{G}_t)_{0 \leq t \leq T}, \mathbb{P})$ is modeled by

\begin{equation}
\label{1}
S_t= S_0 \exp (X_t),
\end{equation}
\begin{equation}
\label{2}
dX_t = (\mu + \beta \sigma_t ^2 )\,dt + \sigma_t\, dW_t + \rho \,dZ_{\lambda t}, 
\end{equation}
\begin{equation}
\label{3}
d\sigma_t ^2 = -\lambda \sigma_t^2 \,dt + dZ_{\lambda t}, \quad \sigma_0^2 >0,
\end{equation}
where the parameters $\mu, \beta, \rho, \lambda \in \mathbb{R}$ with $\lambda >0$ and $\rho \leq 0$ and $r$ is the risk free interest rate where a stock or commodity is traded up to a fixed horizon date $T$.  In the above model $W_t$ is a Brownian motion, and the process $Z_{\lambda t}$ is a subordinator. Also $W$ and $Z$ are assumed to be independent, and $(\mathcal{G}_t)$ is assumed to be the usual augmentation of the filtration generated by the pair $(W, Z)$. 

However, the results and theoretical framework are far from being satisfactory.  The BN-S model does not incorporate the \emph{long-range dependence} property. As such, the model fails significantly for a longer ranges of time. To incorporate  \emph{long-range dependence}, a class of superpositions of Ornstein-Uhlenbeck (OU)-type processes is constructed in literature in terms of integrals with respect to independently scattered random measures (see \cite{BN1, Sem2}). With appropriate conditions, the resulting processes are incorporated with \emph{long-range dependence}. A limiting procedure results in processes that are second-order self-similar with stationary increments. Other resulting limiting processes are stable and self-similar with stationary increments. However, it is statistically unappealing to fit a large number of OU processes, at least by any formal likelihood-based method. To address this issue, in \cite{recent} a new method is developed. 

As proposed in \cite{recent}, $S= (S_t)_{t \geq 0}$ on some filtered probability space $(\Omega, \mathcal{F}, (\mathcal{F}_t)_{0 \leq t \leq T}, \mathbb{P})$, is given by \eqref{1}, where the dynamics of $X_t$ in \eqref{2} is given by
\begin{equation}
\label{2new}
dX_t = (\mu + \beta \sigma_t ^2 )\,dt + \sigma_t\, dW_t + \rho\left( (1-\theta) \,dZ_{\lambda t}+ \theta dZ^{(b)}_{\lambda t}\right), 
\end{equation}
where $Z$ and $Z^{(b)}$ are two independent subordinators, and $\theta \in [0,1]$ is a deterministic parameter. Machine learning algorithms can be implemented to determine the value of $\theta$. 
The process $Z^{(b)}$ in \eqref{2new} is a subordinator that has greater intensity than the subordinator $Z$. Also, $W$, $Z$ and $Z^{(b)}$ are assumed to be independent, and $(\mathcal{F}_t)$ is assumed to be the usual augmentation of the filtration generated by $(W, Z, Z^{(b)} )$.

In this case \eqref{3} will be given by
\begin{equation}
\label{4new}
d\sigma_t ^2 = -\lambda \sigma_t^2 \,dt + (1- \theta') dZ_{\lambda t} + \theta' dZ^{(b)}_{\lambda t} , \quad \sigma_0^2 >0,
\end{equation}
where, as before, $\theta' \in [0,1]$ is deterministic. It is worth noting that when $\theta=0$,  \eqref{2new} reduces to \eqref{2}. Similarly, when $\theta'=0$, \eqref{4new} reduces to \eqref{3}. 

We conclude this section with some properties of this new model. Note that $(1-\mu) \,dZ_{\lambda t}+ \mu dZ^{(b)}_{\lambda t}$, where $\mu \in [0,1]$, is also a L\'evy subordinator that is positively correlated with both $Z$ and $Z^{(b)}$. Note that the solution of \eqref{4new} can be explicitly written as
\begin{equation}
\label{5}
\sigma_t^2= e^{-\lambda t}\sigma_0^2 + \int_0^t e^{-\lambda (t-s)}\, \left( (1-\theta') dZ_{\lambda t} + \theta' dZ^{(b)}_{\lambda t}\right).
\end{equation}
The \emph{integrated variance} over the time period $[t, T]$ is given by $\sigma_{I}^{2}= \int_t^T \sigma_s^2\, ds$, and a straight-forward calculation shows 
\begin{equation}
\label{6}
\sigma_{I}^{2}= \epsilon(t,T) \sigma_t^2 + \int_t^T \epsilon(s,T)\, \left((1-\theta') dZ_{\lambda t} +  \theta' dZ^{(b)}_{\lambda t}\right),
\end{equation}
where 
\begin{equation}
\label{95}
\epsilon(s,T)= \left(1- \exp(-\lambda(T-s))\right)/\lambda, \quad t \leq s \leq T.
\end{equation}
We derive a general expression for the characteristic function of the conditional distribution of the log-asset price process appearing in the BN-S model given by equations \eqref{1}, \eqref{2new} and \eqref{4new}. For simplicity, we assume $$\theta=\theta'.$$

As shown in \cite{recent}, the advantages of the dynamics given by \eqref{1}, \eqref{2new}, and \eqref{4new} over the existing models are significant. The following theorem is proved in \cite{recent}. From this result, it is clear that as $\theta$ is constantly adjusted, for a fixed $s$, the value of $t$ always has an upper limit.   Consequently, $\text{Corr}(X_t, X_s)$ never becomes very small, and thus \emph{long-range dependence} is incorporated in the model.
\begin{theorem}
\label{big12222}
If the jump measures associated with the subordinators $Z$ and $Z^{(b)}$ are $J_Z$ and $J^{(b)}_Z$ respectively, and $J(s)= \int_0^s \int_{\mathbb{R}^+} J_Z(\lambda d\tau, dy)$, $J^{(b)}(s)= \int_0^s \int_{\mathbb{R}^+} J^{(b)}_Z(\lambda d\tau, dy)$; then for the log-return of the improved BN-S model given by \eqref{1}, \eqref{2new}, and \eqref{4new},
\begin{align}
\label{corrBNSimproved}
\text{Corr}(X_t, X_s)= \frac{\int_0^s \sigma_{\tau}^2 d\tau + \rho^2 (1-\theta)^2 J(s) + \rho^2 \theta^2 J^{(b)}(s)}{ \sqrt{\alpha(t) \alpha(s)}},
\end{align}
for $t>s$, where
$\alpha(\nu) = \int_0^{\nu} \sigma_{\tau}^2 d\tau + \nu\rho^2 \lambda ((1-\theta)^2 \text{Var}(Z_1)+  \theta^2 \text{Var}(Z^{(b)}_1)) $.
\end{theorem}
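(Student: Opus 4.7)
The plan is to compute $\text{Cov}(X_t,X_s)$ and $\text{Var}(X_\nu)$ by decomposing the dynamics \eqref{2new} into three mutually independent sources of randomness, namely the Brownian integral $M_\nu := \int_0^\nu \sigma_u\,dW_u$, the jump part $N_\nu := \rho(1-\theta)Z_{\lambda\nu} + \rho\theta Z^{(b)}_{\lambda\nu}$, and the drift $A_\nu := \int_0^\nu (\mu+\beta\sigma_u^2)\,du$, and then exploiting the independence of $W$, $Z$, and $Z^{(b)}$ so that the cross-covariances between these three pieces reduce to a minimal set of terms.

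For the Brownian contribution, since $M_\nu$ is a martingale, the covariance for $t>s$ reduces by the tower property and It\^o isometry to $\mathbb{E}[M_s^2] = \mathbb{E}\!\left[\int_0^s \sigma_u^2\,du\right]$, which yields the term $\int_0^s \sigma_\tau^2\,d\tau$ in the numerator of \eqref{corrBNSimproved}. For the jump contribution, the independence of increments of the subordinators gives $\text{Cov}(Z_{\lambda t},Z_{\lambda s}) = \text{Var}(Z_{\lambda s})$ for $t>s$, and analogously for $Z^{(b)}$; rewriting these variances through the jump measures $J_Z$ and $J^{(b)}_Z$ produces the pieces $\rho^2(1-\theta)^2 J(s)$ and $\rho^2\theta^2 J^{(b)}(s)$. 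To obtain $\alpha(\nu)$ in the denominator I set $t=s=\nu$ and sum the three variance contributions: $\int_0^\nu \sigma_\tau^2\,d\tau$ from $M$, together with $\nu\lambda\rho^2(1-\theta)^2\text{Var}(Z_1)$ and $\nu\lambda\rho^2\theta^2\text{Var}(Z^{(b)}_1)$ from the two subordinators. Dividing the covariance by $\sqrt{\alpha(t)\alpha(s)}$ then delivers \eqref{corrBNSimproved}.

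The main obstacle I expect is the $\beta\int_0^\nu \sigma_u^2\,du$ piece inside $A_\nu$: because $\sigma^2$ is itself driven by the same subordinators $Z$ and $Z^{(b)}$ that appear in $N_\nu$, this drift is correlated with the jump component, and one must verify that its contribution is consistent with the compact closed form asserted. The cleanest route is to interpret the quantities $\int_0^s \sigma_\tau^2\,d\tau$ appearing in the statement as pathwise integrals along a realization of the variance process, so that \eqref{corrBNSimproved} is read as a conditional correlation given $(\sigma_u^2)_{u\ge 0}$; then I would unwind the explicit representation \eqref{5} to check that the jump-driven fluctuations of $A_\nu$ combine with the $Z$ and $Z^{(b)}$ covariances in exactly the form recorded. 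With this interpretation fixed, what remains is routine bookkeeping of independent-increment variances and martingale second moments.
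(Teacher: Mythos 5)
You should first note that the paper does not actually prove Theorem \ref{big12222}: it is quoted from \cite{recent}, so your attempt can only be measured against the computation given there, which—like your outline—splits $X$ into drift, Brownian and subordinator parts and uses the independence of $W$, $Z$, $Z^{(b)}$, the It\^o isometry, and the independent stationary increments of the subordinators. At the level of the decomposition your plan is the natural (and essentially the intended) route.

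The way you propose to close the argument, however, has genuine gaps. (i) Your fix for the randomness of $\int_0^s\sigma_\tau^2\,d\tau$ — reading \eqref{corrBNSimproved} as a correlation conditional on the whole path of $\sigma^2$ — is self-defeating: by \eqref{4new} (with $\theta=\theta'$) the path of $\sigma^2$ determines the combined subordinator $(1-\theta)Z+\theta Z^{(b)}$, hence the entire jump part of $X$ in \eqref{2new}; conditionally on $\sigma^2$ those terms are constants and contribute nothing, so a conditional correlation cannot contain $\rho^2(1-\theta)^2J(s)$, $\rho^2\theta^2J^{(b)}(s)$, or the $\mathrm{Var}(Z_1)$, $\mathrm{Var}(Z_1^{(b)})$ terms in $\alpha$. (ii) Your unconditional computation of the jump covariance yields $\rho^2(1-\theta)^2\mathrm{Var}(Z_{\lambda s})=s\lambda\rho^2(1-\theta)^2\mathrm{Var}(Z_1)$; this does not ``rewrite'' into $\rho^2(1-\theta)^2J(s)$, since $J(s)=\int_0^s\int_{\mathbb{R}^+}J_Z(\lambda d\tau,dy)$ is a pathwise random functional of the jump measure, not a moment (and, as written, it even lacks a $y$-weight, so it counts jumps rather than aggregating their sizes). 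The stated formula mixes realized quantities in the numerator with population moments in $\alpha$, and a proof must state precisely in which sense ``Corr'' is computed to produce that hybrid; your proposal treats the Brownian piece conditionally and the jump piece unconditionally, which are incompatible readings. (iii) The correlation of the drift $\beta\int_0^\cdot\sigma_u^2\,du$ with $Z$ and $Z^{(b)}$, which you rightly flag as the main obstacle, is only promised to be ``checked'' by unwinding \eqref{5}; carrying out that unconditional computation produces additional terms, e.g. $2\beta\rho\,\mathrm{Cov}\bigl(\int_0^t\sigma_u^2\,du,\;(1-\theta)Z_{\lambda t}+\theta Z^{(b)}_{\lambda t}\bigr)$ and $\beta^2\mathrm{Var}\bigl(\int_0^t\sigma_u^2\,du\bigr)$, which do not appear in \eqref{corrBNSimproved}, so the step as described would not terminate in the stated formula. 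To repair the argument you need to commit to one consistent reading of the left-hand side (for instance the realized, quadratic-covariation-type computation underlying the derivation in \cite{recent}) and carry the drift and jump contributions through under that same reading.
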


We implement the above analysis to empirical data sets. For example, we consider the West Texas Intermediate (WTI or NYMEX) crude oil prices data set for the period June 1, 2009 to May 30, 2019 (Figure 1). In the recent paper \cite{recent}, the appropriateness of modeling such data with a BN-S type stochastic volatility model is discussed. It is clear that such a process is dependent on random shocks, and thus an implementation of the classical model is argued in \cite{recent}. However, in Figure 2, we provide the autocorrelation function of the given data set. It is clear that the long-range dependence criteria must be incorporated in the stochastic model. This justifies the implementation of the refined BN-S model presented in this section. We will discuss a detailed data analysis in Section \ref{dataanalysissetup}. The implementation of the refined BN-S model in lieu of the classical BN-S model comes with the price of the estimation of $\theta$ as described earlier. In the later sections, this serves as a motivation to apply sequential hypothesis testing combined with various machine/deep learning algorithms. This leads to some novel numerical results  related to the present data set.

\begin{figure}[H]
\centering
\caption{Crude oil close price.}
\includegraphics[scale=.6]{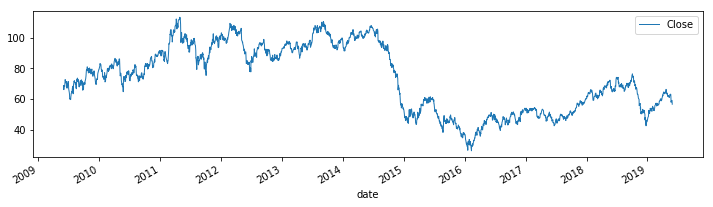}
\end{figure}

\begin{figure}[H]
\centering
\caption{Autocorrelation in crude oil close price.}
\includegraphics[scale=.6]{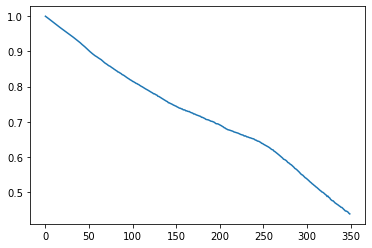}
\end{figure}

We denote $Z^{(e)}= (1- \theta) Z +  \theta Z^{(b)}$. Note that $Z^{(e)}$ is also a subordinator. We call this the \emph{effective} subordinator. We denote the cumulant transforms as $\kappa^{(e)}(\theta)= \log E^{\mathbb{P}}[e^{\theta Z_1^{(e)}}] $.  In this work, we make the following assumption similar to \cite{NV, ijtaf}.

\begin{assumption}
\label{a2}
Assume that $\hat{\theta}^{(e)}= \sup\{ \theta \in \mathbb{R} : \kappa^{(e)}(\theta) < + \infty \}>0$.
\end{assumption}

We state the following well-known result from \cite{NV, ijtaf} and denote the real part and imaginary part of $z \in \mathbb{C}$ as $\Re(z)$ and $\Im(z)$, respectively. 
\begin{theorem}
Let $Z$ be a subordinator with cumulant transform $\kappa$, and let $f: \mathbb{R}_{+} \to \mathbb{C}$ be a complex-valued, left continuous function such that $\Re(f) \leq 0$. Then
\begin{equation}
\label{8}
E\left[ \exp\left( \int_0^t f(s)\, dZ_{\lambda s}\right) \right]= \exp \left( \lambda \int_0^t \kappa (f(s))\, ds \right). 
\end{equation}
The above formula still holds if $Z= Z^{(e)}$ satisfies Assumption \ref{a2} and $f$ is such that $\Re(f) \leq \frac{\hat{\theta}^{(e)}}{(1+\epsilon)}$, for $\epsilon >0$.
\label{le1}
\end{theorem}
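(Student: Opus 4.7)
The plan is to reduce to the case of simple step functions, exploit independent and stationary increments of the subordinator, and then extend by approximation using the left-continuity of $f$.

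First, I would treat simple step functions. Let $0=t_0 < t_1 < \cdots < t_n = t$ and set $f = \sum_{i=1}^n c_i \mathbf{1}_{(t_{i-1},t_i]}$ with $\Re(c_i)\leq 0$. Then the integral collapses to $\int_0^t f(s)\,dZ_{\lambda s}=\sum_i c_i (Z_{\lambda t_i}-Z_{\lambda t_{i-1}})$. Because $Z$ has independent, stationary increments and each factor satisfies $|e^{c_i(Z_{\lambda t_i}-Z_{\lambda t_{i-1}})}|\leq 1$ (since $Z$ is a subordinator and $\Re(c_i)\leq 0$), the expectation factors into a product. The L\'evy--Khintchine extension of $\kappa$ to the closed half-plane $\{\Re(z)\le 0\}$ then yields $E[\exp(c_i(Z_{\lambda t_i}-Z_{\lambda t_{i-1}}))]=\exp(\lambda(t_i-t_{i-1})\kappa(c_i))$, and summing gives exactly the right-hand side of \eqref{8} evaluated at the step function.

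Next I would pass to a general left-continuous $f$ by discretization. Set $f_n(s) = \sum_{i=1}^{n} f(t_{i-1}^{(n)}) \mathbf{1}_{(t_{i-1}^{(n)},t_i^{(n)}]}(s)$ with mesh tending to zero. Left-continuity forces $f_n(s)\to f(s)$ pointwise, and since $Z_{\lambda\cdot}$ has paths of finite variation, the Riemann--Stieltjes sums $\int_0^t f_n(s)\,dZ_{\lambda s}$ converge almost surely to $\int_0^t f(s)\,dZ_{\lambda s}$. The modulus of the complex exponential on the left-hand side is bounded by $1$ since $\Re(f_n)\leq 0$, so dominated convergence transfers the limit through the expectation. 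On the right-hand side, $\kappa$ is continuous on $\{\Re(z)\leq 0\}$ and $|e^{\lambda\int_0^t \kappa(f_n(s))\,ds}|\leq 1$, so another application of dominated convergence closes the argument.

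For the second assertion, I would repeat the same two-step scheme with $Z=Z^{(e)}$ and invoke Assumption \ref{a2}. The bound $\Re(f)\leq \hat{\theta}^{(e)}/(1+\epsilon)$ lies strictly inside the half-plane of convergence of the Laplace transform, so $\kappa^{(e)}$ is well-defined and analytic there and the step-function identity again follows from independence of increments. The only genuine difference is that the integrand is no longer bounded by $1$; the natural dominating random variable is $\exp\bigl(\tfrac{\hat{\theta}^{(e)}}{1+\epsilon}\,Z^{(e)}_{\lambda t}\bigr)$, whose expectation is finite precisely because $\tfrac{\hat{\theta}^{(e)}}{1+\epsilon}<\hat{\theta}^{(e)}$. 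This furnishes the dominating function for the dominated convergence step in the limit $n\to\infty$.

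The main obstacle is the approximation argument in the second case, because the dominating random variable is no longer trivially bounded. Verifying uniform integrability of $\exp(\int_0^t f_n(s)\,dZ^{(e)}_{\lambda s})$ via the strict inequality $\tfrac{\hat{\theta}^{(e)}}{1+\epsilon}<\hat{\theta}^{(e)}$ supplied by Assumption \ref{a2}, together with the analytic extension of $\kappa^{(e)}$ into that half-plane, is the real content; the step-function identity and the left-continuity-based discretization are standard once that integrability cushion is in place.
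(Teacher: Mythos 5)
The paper itself gives no proof of Theorem \ref{le1}: it is stated as a well-known result and simply cited from \cite{NV, ijtaf}, so there is no in-paper argument to compare against. Your proposal is essentially the standard proof underlying those references --- verify the identity for step functions using independent, stationary increments and the L\'evy--Khintchine form of $\kappa$ on $\{\Re(z)\leq 0\}$, then discretize a general left-continuous $f$ and pass to the limit by dominated convergence --- and it is sound. In particular, your handling of the second assertion is the right one: the trivial bound by $1$ fails, and the dominating variable $\exp\bigl(\tfrac{\hat{\theta}^{(e)}}{1+\epsilon}\,Z^{(e)}_{\lambda t}\bigr)$ is integrable precisely because $\tfrac{\hat{\theta}^{(e)}}{1+\epsilon}<\hat{\theta}^{(e)}$ under Assumption \ref{a2}, which is exactly why the theorem imposes that strict gap. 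Two small points are worth making explicit if you write this up: (i) the dominated-convergence steps, both for the pathwise Riemann--Stieltjes sums against the finite random measure $dZ_{\lambda s}$ on $[0,t]$ and for the deterministic integral of $\kappa(f_n)$, need $f$ bounded on $[0,t]$ (left-continuity alone does not guarantee this; it is implicit in how the lemma is used); and (ii) you should record why $\kappa$, respectively $\kappa^{(e)}$, is continuous up to the relevant boundary and why $E[e^{zZ_1}]\neq 0$ there, since that is what legitimizes taking logarithms and passing limits on the right-hand side.
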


The \emph{Laplace transform} of $X_{T|t}$, the conditional distribution of $X_T$ given the information up to time $t \leq T$, is given by $\phi(z)= E^{\mathbb{P}}[\exp(zX_T)| \mathcal{F}_t]$, for $z \in \mathbb{C}$ such that the expectation is well-defined.

\begin{theorem}
\label{th1}
In the case of the general BN-S model described in equations \eqref{1}, \eqref{2new} and \eqref{4new}, the Laplace transform $\phi(z)= E[\exp(zX_T)| \mathcal{F}_t]$ of $X_{T|t}$ is given by
\begin{equation}
\label{91}
\phi(z)= \exp \left(z(X_t+ \mu(T-t))+ \frac{1}{2}(z^2+ 2\beta z) \epsilon(t, T) \sigma_t^2+ \lambda \int_t^T G(s,z)\, ds \right),
\end{equation}
where $G(s,z)= \kappa^{(e)}\left(\rho z + \frac{1}{2}(z^2+ 2 \beta z) \epsilon(s, T) \right)$.

The transform $\phi(z)$ is well defined in the open strip $\mathcal{S}= \{ z\in \mathbb{C} : \Re (z) \in (\theta_{-}, \theta_{+})\}$, where 
$$\theta_{-} = \sup_{t \leq s \leq T} \{-\beta - \frac{\rho}{\epsilon(s, T)}- \sqrt{\Delta_1}\},$$
and 
$$\theta_{+} = \inf_{t \leq s \leq T} \{-\beta - \frac{\rho}{\epsilon(s, T)} + \sqrt{\Delta_1}\},$$
where $\Delta_1= (\beta + \frac{\rho}{\epsilon(s, T)})^2 + 2 \frac{\hat{\theta}^{(e)}}{\epsilon(s,T)}$.
\end{theorem}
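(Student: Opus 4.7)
The plan is to compute $\phi(z) = E[e^{zX_T}\mid\mathcal{F}_t]$ by integrating the SDE \eqref{2new} on $[t,T]$, substituting the explicit form \eqref{6} of the integrated variance, and then evaluating the resulting conditional expectation by a two-step tower argument that exploits the independence of $W$ and the effective subordinator $Z^{(e)} := (1-\theta)Z + \theta Z^{(b)}$.

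First I would write $X_T = X_t + (X_T - X_t)$ and pull the $\mathcal{F}_t$-measurable factor $e^{zX_t}$ out of the conditional expectation. Integrating \eqref{2new} from $t$ to $T$ gives
\begin{equation*}
X_T - X_t = \mu(T-t) + \beta\int_t^T \sigma_s^2\,ds + \int_t^T \sigma_s\,dW_s + \rho\int_t^T dZ^{(e)}_{\lambda s}.
\end{equation*}
Replacing $\int_t^T \sigma_s^2\,ds$ by the explicit representation \eqref{6} and collecting terms according to whether they are deterministic, proportional to $\sigma_t^2$, linear in $W$, or linear in $Z^{(e)}$, I obtain
\begin{equation*}
z(X_T - X_t) = z\mu(T-t) + \bigl(\tfrac{z^2}{2}+\beta z\bigr)\epsilon(t,T)\sigma_t^2 - \tfrac{z^2}{2}\sigma_I^2 + z\!\int_t^T\!\sigma_s\,dW_s + \int_t^T\! f(s)\,dZ^{(e)}_{\lambda s},
\end{equation*}
after adding and subtracting $\tfrac{z^2}{2}\sigma_I^2$, where $f(s) = \rho z + (\tfrac{z^2}{2}+\beta z)\epsilon(s,T)$.

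Next I would condition on the enlarged $\sigma$-algebra $\mathcal{F}_t \vee \sigma(Z^{(e)}_{\lambda u}: u \in [t,T])$. Under this conditioning the path $(\sigma_s)_{s\in[t,T]}$ is frozen, and since $W\perp Z^{(e)}$ the stochastic integral $\int_t^T \sigma_s\,dW_s$ is centered Gaussian with variance $\sigma_I^2$. Its moment generating function contributes $\exp(\tfrac{z^2}{2}\sigma_I^2)$, which exactly cancels the $-\tfrac{z^2}{2}\sigma_I^2$ term introduced above. The remaining outer expectation is
\begin{equation*}
E\!\left[\exp\!\left(\int_t^T f(s)\,dZ^{(e)}_{\lambda s}\right)\,\Big|\,\mathcal{F}_t\right],
\end{equation*}
to which Theorem~\ref{le1} applies with cumulant $\kappa^{(e)}$ and the left-continuous deterministic integrand $f$, yielding $\exp\bigl(\lambda\int_t^T \kappa^{(e)}(f(s))\,ds\bigr)$. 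Assembling everything produces \eqref{91}.

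The last step is the strip of convergence, which is the step I expect to be the most delicate. Theorem~\ref{le1} requires $\Re(f(s)) \le \hat{\theta}^{(e)}/(1+\epsilon)$ uniformly in $s\in[t,T]$. For real $z=a$ this reads
\begin{equation*}
\bigl(\tfrac{a^{2}}{2}+\beta a\bigr)\epsilon(s,T) + \rho a \;\le\; \hat{\theta}^{(e)},
\end{equation*}
which, after multiplying by $2/\epsilon(s,T)>0$ and completing the square, becomes the quadratic inequality
$a^{2} + 2\bigl(\beta + \rho/\epsilon(s,T)\bigr)a - 2\hat{\theta}^{(e)}/\epsilon(s,T) \le 0$. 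Its two roots are precisely $-\beta - \rho/\epsilon(s,T) \pm \sqrt{\Delta_1}$, so the admissible real parts for a fixed $s$ form the interval between these roots. Intersecting over $s\in[t,T]$ gives $\Re(z)\in(\theta_-,\theta_+)$ as defined in the statement; the $1+\epsilon$ factor can be absorbed by restricting slightly inside the interval, and an analytic-continuation argument extends the identity to the whole open strip $\mathcal{S}$.
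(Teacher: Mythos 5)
Your proposal is correct and follows essentially the same route as the paper's proof: decompose $X_T$ via the integrated SDE, condition on $\mathcal{F}_t \vee \sigma(Z^{(e)})$ so the Gaussian integral contributes $\exp\bigl(\tfrac{z^2}{2}\sigma_I^2\bigr)$, substitute \eqref{6}, and apply Theorem \ref{le1} to the integrand $f(s)=\rho z+\tfrac{1}{2}(z^2+2\beta z)\epsilon(s,T)$. Your strip analysis via completing the square is in fact more explicit than the paper's one-line justification, but it is the same underlying quadratic condition defining $\theta_{\pm}$.
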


\begin{proof}
 The proof is in the appendix.
\end{proof}

\section{Jump size detection based on sequential hypothesis tests}
\label{ch:generalization}

In Section \ref{sec12}, it is observed that the refined BN-S model can be successfully implementable only when $\theta$ can be successfully computed for \eqref{2new} and \eqref{4new} (with $\theta= \theta'$). Note that, as discussed in the previous section, the value of $\theta$ is in the interval  $0 \leq \theta \leq 1$. However, in order to simplify the subsequent analysis, $\theta$ is rounded to either $0$ or $1$. This is motivated by the simplistic assumption that the jumps are either from one distribution or another. Also, in this case it is easier to interpret the confusion matrix corresponding to related classification algorithm. To find $\theta$, in \cite{recent}, a machine learning based empirical analysis is implemented. However, the procedure implemented in that paper does not  incorporate any hypothesis testing for $\theta$. In this section, we provide a more theoretical jump size detection analysis based on the sequential test of a hypothesis.

\subsection{Theoretical results}

We consider a L\'evy process $Z$ defined by L\'evy triplet $(\mu, \sigma^2, \nu^*)$, where $\mu$ is the drift, $\sigma$ is the diffusion, and $\nu^*(dx)=(1+\alpha x)\nu(dx)$ for some L\'evy measure $\nu$ defined on $\mathbb{R}^+$. We are interested in detecting a \emph{significant jump} in the process. Consequently, we wish to test the hypotheses
\begin{align}\label{jumphyp}
H_{0}: \alpha=0, \quad \quad H_{1}: \alpha=a> 0,  
\end{align}
which clearly address the size of the jumps in the L\'evy process.

The L\'evy process generates a filtration, which will be denoted $\mathcal{F}_t^{(i)}$, $i=0,1$. Further, the hypotheses induce probability measures $P_{i}$, $i=0,1$. We seek to create a decision rule $(\tau, \delta_\tau)$, where $\tau$ is a stopping rule with respect to $\mathcal{F}_t$, and $\delta_\tau$ is a random variable taking values in the index set $\{0, 1\}$, denoting which hypothesis to reject. 


Let  the log-likelihood ratio of the marginal density be given by
\begin{equation}\label{logratio}
u_t = \log \frac{dP_{0}}{dP_{1}},
\end{equation} 
and consider an interval $[l,r] \subset \mathbb{R}$ with $l < 0 < r$. We define the decision rules to be
\begin{align} \label{onedec}
\nonumber \tau&=\inf \{ t\geq 0: u_t \notin [l,r] \},\\
\nonumber \delta_{\tau} & =1, \hbox{ if } u_{\tau}\leq l, \\
\delta_{\tau} & = 0, \hbox{ if } u_{\tau}\geq r.
\end{align}

\begin{theorem}\label{jumpgen}
With the process $u_t$ defined as in \eqref{logratio}, we have infinitesimal generators, given by
$$\mathcal{L}\xi(x):= -\gamma \xi'(x) +\frac{1}{2} \beta^2 \xi''(x) +\int_{\mathbb{R}_+} \left(\xi(x+y)-\xi(x)-\frac{y \xi'(x)}{1+|y|}\right)K(dy),$$
for any suitable $\xi$, where 
\begin{align}
\beta&= -a\int_{x>0} (1\wedge x) \sigma^{-1}  x\nu(dx)\label{beta},\\
m&=a\int_{x>1} x\nu(dx)\label{m},\\
\gamma&=m-\frac{\beta^2}{2}+\int_0^1 (\log(1+x)^2-x)a\nu(dx),\label{gamma}\\
K&=a\log(1+x)\nu\label{K}. 
\end{align}
\end{theorem}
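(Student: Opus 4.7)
The strategy is to write $u_t$ explicitly as a functional of the Lévy process $Z$ via Girsanov's theorem, recognize $u_t$ as a Lévy process in its own right, and then apply the standard Lévy--Khintchine generator formula. First, I would apply Girsanov's theorem for Lévy processes to the pair $(P_0, P_1)$. Because the two Lévy measures differ by the multiplicative density $d\nu^*/d\nu = 1 + ax$ (with $\alpha = a$ under $H_1$ and $\alpha = 0$ under $H_0$), the local Radon--Nikodym derivative $dP_1/dP_0|_{\mathcal{F}_t}$ decomposes into a pure-jump factor, built as a Doléans--Dade exponential of $\log(1+ax)$ integrated against the compensated jump measure of $Z$, together with a Brownian factor $\exp(\theta W_t - \tfrac{1}{2}\theta^2 t)$ needed to keep the drift of $Z$ consistent across the two measures; the Brownian drift parameter $\theta$ is proportional to $\sigma^{-1}$, which is the origin of the $\sigma^{-1}$ appearing in \eqref{beta}. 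Taking the negative logarithm expresses $u_t$ as a deterministic linear drift, a Gaussian contribution, and compensated Poisson integrals against the jump measure of $Z$, following the template used in \cite{Roberts}.

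Second, this representation makes $u_t$ a Lévy process under each of $P_0, P_1$, so I would read off its Lévy--Khintchine triplet with respect to the truncation function $h(y) = y/(1+|y|)$. The Brownian contribution gives the Gaussian coefficient $\beta$ as in \eqref{beta}; the uncompensated large-jump contribution yields the drift $m$ as in \eqref{m}; the logarithmic jump transformation of $Z$ gives rise to the Lévy measure $K$ for the jumps of $u_t$, matching \eqref{K}; and the total drift $-\gamma$ assembles the large-jump term $m$, the Gaussian shift $-\beta^2/2$, and the residual quadratic compensation $\int_0^1 (\log(1+x)^2 - x)\,a\nu(dx)$ that arises when converting the natural Girsanov drift to the symmetric truncation $h$, producing the formula \eqref{gamma} for $\gamma$.

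Substituting the identified triplet $(-\gamma, \beta^2, K)$ into the classical Lévy--Khintchine generator formula relative to the truncation $h(y) = y/(1+|y|)$ then directly yields the stated expression for $\mathcal{L}\xi$. The principal technical obstacle is the bookkeeping in the second step: because $\nu$ need only be a Lévy measure (so $\int_0^1 x\,\nu(dx)$ may diverge), the Girsanov density must be defined through compensated Poisson integrals from the outset, and the algebraic conversion between the natural Girsanov compensation and the symmetric truncation $h(y) = y/(1+|y|)$ is the technical heart of the calculation, since it is precisely this conversion that produces the cross-term $\log(1+x)^2 - x$ in \eqref{gamma}.
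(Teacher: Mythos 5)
Your proposal is correct and follows essentially the same route as the paper: apply the generalized Girsanov theorem for L\'evy processes to write the likelihood ratio as a Dol\'eans--Dade exponential (with the Brownian shift proportional to $\sigma^{-1}$ accounting for the drift adjustment, and the jump part driven by the density $1+ax$), identify $u_t$ as a L\'evy process with triplet $(\gamma,\beta^2,K)$, and invoke the standard generator formula for that triplet. The only difference is bookkeeping: the paper obtains the characteristics of $u_t=\log\mathcal{E}(-N)_t$ by citing Cont--Tankov (Proposition 8) for the log-of-stochastic-exponential transformation, whereas you carry out that conversion (including the $\log(1+x)^2-x$ compensation term) by hand.
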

\begin{proof}
 The proof is in the appendix.
\end{proof}
Assign $\xi$ to be the probability of a correct decision given $H_0$. Then we have the partial integro-differential equations $\mathcal{L}\xi=0$ with boundary conditions
\begin{align}
\xi(l)=1, \quad \quad \quad \xi(r)=0.  \label{bcs}
\end{align}
Further, we have $ \xi>0$ inside $R=(l,r)$.

Before proving the existence of a solution to the stated boundary value problem, we need a few more definitions and a theorem from 
\cite{EU} that will be used:

\begin{definition}
An upper semicontinuous function $l: \R\to \R$ is a \emph{subsolution} of $$F(0,\xi,D\xi,D\xi^2,\mathcal{I}[\xi](x))=0$$ subject to boundary conditions \eqref{bcs} if, for any test function $\phi\in C^2(\R)$, at each maximum point $x_0 \in \bar{R}$ of $l-\phi$ in $B_\delta(x_0)$, we have 
$$E(l,\phi,x_0):=F(x_0,l(x_0),D\phi(x_0),D^2\phi(x_0), I_\delta^1[\phi](x_0)+I_\delta^2[l](x_0))\leq 0 \hbox{   if $x_0\in R$}$$
or
$$\min(E(l,\phi,x_0);u(x_0)-g(x_0))\leq 0 \hbox{  if $x_0\in \partial R$},$$
where
\begin{align*}
I_\delta^1[\phi](x_0)=\int_{|z|<\delta} \left( \phi(x_0+z)-\phi(x_0)-(D\phi(x_0) \cdot z)\textbf{1} _B (z)\right) d\mu_{x_0}(z), \\
I_\delta^2[u](x_0)=\int_{|z|\geq\delta} \left( u(x_0+z)-u(x_0)-(D\phi(x_0) \cdot z)\textbf{1}_B (z)\right) d\mu_{x_0}(z) .
\end{align*}

Similarly, a lower semicontinuous function $u: \R \to \R$ is a \emph{supersolution} of the same boundary value problem if for any test function $\phi\in C^2(\R)$, at each minimum point $x_0 \in \bar{R}$ of $u-\phi$ in $B_\delta(x_0)$, we have 
$$E(u,\phi,x_0)\geq 0\hbox{   if $x_0\in R$}$$
or 
$$\max(E(l,\phi,x_0);u(x_0)-g(x_0))\leq 0 \hbox{  if $x_0\in \partial R$}.$$
Finally, a \emph{viscosity solution} is a function whose upper and lower semicontinuous envelopes are respectively a \emph{sub-solution} and a \emph{super-solution}.
\end{definition}

\begin{theorem}\label{EUtheorem}
If $F:\R^5\to \R$, and
\begin{enumerate}
\item[(A1)] $F(x,u,p,X,i_1)\leq F(x,u,p,Y,i_2)$ if $X\geq Y$ and $i_1\geq i_2$,
\item[(A2)] there exists $\gamma>0$ such that for any $x, u, v, p, X, i \in \R,$
$$F(x,u,p,X,i)-F(x,v,p,X,i)\geq \gamma (u-v) \hbox{   if $u \geq v$},$$
for some $\epsilon>0$ and $r(\beta)\to 0$ as $\beta \to 0$, we have
$$F(y,v,\epsilon^{-1}(x-y),Y,i)-F(x,v,\epsilon^{-1}(x-y),X,i)\leq \omega_R (\epsilon^{-1}|x-y|^2+|x-y|+r(\beta)),$$
\item[(A3)] $F$ is uniformly continuous with respect to all arguments,\\
\item[(A4)] $\sup_{x\in \R} |F(x,0,0,0,0)|<\infty$,\\
\item[(A5)] $K$ is a L\'evy-It\^o measure,
\item[(A6)] the inequalities in \eqref{limsupinf} are strict,
\item[(A7)] for any $R>0$, there exists a modulus of continuity $\omega_R$ such that, for any $x,y\in\R$, $|v|\leq R$, $i\in \R$, and for any $X,Y\in \R$ satisfying
$$\left[\begin{array}{cc} X & 0 \\ 0 & Y\end{array}\right] \leq \frac{1}{\epsilon}\left[\begin{array}{cc} 1 & -1 \\ -1 & 1 \end{array}\right]+r(\beta)\left[\begin{array}{cc} 1 & 0 \\ 0 & 1 \end{array}\right],$$
\end{enumerate}
then there is a unique solution to $F(0,\xi,D \xi, D^2 \xi,\mathcal{I}[\xi](x))=0$ between any pair of super-solution and sub-solutions, defined below, where 
$$\mathcal{I}[\xi](x):=\int_{\mathbb{R}_+} \left(\xi(x+y)-\frac{y \xi'(x)}{1+|y|}\right)K(dy).$$
\end{theorem}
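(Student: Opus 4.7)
The plan is to follow the classical two-step approach for viscosity solutions of non-local elliptic equations: first establish a comparison principle between sub-solutions and super-solutions, and then invoke Perron's method to construct the unique viscosity solution sandwiched between the given pair.

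For the comparison principle, I would argue by contradiction using the doubling of variables technique adapted to integro-differential operators. Suppose $u$ is a sub-solution and $v$ a super-solution with $u \leq v$ on $\partial R$, yet $M := \sup_{\bar{R}}(u-v) > 0$. Consider the penalized functional $\Phi_\epsilon(x,y) = u(x) - v(y) - |x-y|^2/(2\epsilon)$, and let $(x_\epsilon, y_\epsilon)$ attain its maximum. Standard estimates yield $|x_\epsilon - y_\epsilon|^2/\epsilon \to 0$ and $\Phi_\epsilon(x_\epsilon, y_\epsilon) \to M$ as $\epsilon \to 0$. Applying the non-local Jensen-Ishii lemma produces matrices $X,Y$ satisfying the inequality in (A7), together with the viscosity inequalities for $F$ at the two points. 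Subtracting these inequalities and invoking the monotonicity conditions (A1) in the matrix and non-local arguments, the strict monotonicity in $u$ from (A2), the uniform continuity (A3), the L\'evy-It\^o decomposition granted by (A5), and the strictness (A6), one arrives at the contradiction $\gamma M \leq \omega_R(\epsilon^{-1}|x_\epsilon - y_\epsilon|^2 + |x_\epsilon - y_\epsilon| + r(\beta)) \to 0$, forcing $M \leq 0$.

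For existence, I would use Perron's method adapted to the non-local operator $\mathcal{I}$: given the sub-solution $u_0$ and super-solution $v_0$ with $u_0 \leq v_0$, define $\xi(x) = \sup\{w(x) : w \text{ is a sub-solution with } u_0 \leq w \leq v_0\}$. The boundedness condition (A4) guarantees that the class is non-empty. The classical Perron argument adapted to non-local equations shows that the upper semicontinuous envelope $\xi^*$ is itself a sub-solution and the lower semicontinuous envelope $\xi_*$ is a super-solution; here the uniform continuity (A3) and the integrability of the L\'evy-It\^o measure (A5) are what permit passage to envelopes inside $\mathcal{I}$. Applying the comparison principle to $\xi^*$ and $\xi_*$ yields $\xi^* \leq \xi_*$, and since trivially $\xi_* \leq \xi \leq \xi^*$, we conclude $\xi_* = \xi^* = \xi$, so $\xi$ is a continuous viscosity solution squeezed between $u_0$ and $v_0$. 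Uniqueness is immediate from a second application of comparison.

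The main obstacle is the handling of the non-local operator $\mathcal{I}$ in the doubling-of-variables step. Unlike in the purely local setting, there is no pointwise matrix inequality that directly bounds $\mathcal{I}[u](x_\epsilon) - \mathcal{I}[v](y_\epsilon)$; instead the integral must be split at level $\delta$ into a near part $|z| < \delta$, which is controlled by the $C^2$ test function obtained from the quadratic penalty, and a far part $|z| \geq \delta$, which is controlled using $u(x+z) - v(y+z) \leq M$ together with the integrability furnished by (A5). Ensuring that the resulting errors collapse as $\delta \to 0$ and $\epsilon \to 0$ in a compatible order, and that they fit cleanly into the modulus $\omega_R$ of (A7), is the technical heart of the proof; the strictness hypothesis (A6) is precisely what prevents the non-local contribution from absorbing the would-be contradiction.
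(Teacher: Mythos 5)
A point of orientation first: the paper does not prove this statement at all --- Theorem \ref{EUtheorem} is quoted from \cite{EU} (Barles--Chasseigne--Imbert) as an imported tool, and the only verification carried out in the paper is Lemma \ref{EUlem}, which checks that the particular $F$ and $K$ arising from the jump-detection problem satisfy (A1)--(A5). So your proposal can only be measured against the argument in the cited reference. Your two-step skeleton --- a comparison principle via doubling of variables with a nonlocal Jensen--Ishii lemma, splitting the singular integral at a level $\delta$ into a near part controlled by the test function and a far part controlled by the maximum and the integrability of the measure, followed by Perron's method between the given sub- and super-solution --- is indeed the architecture of the proof in \cite{EU}, and your account of how (A1)--(A5) and the matrix condition (A7) enter is correct in spirit.

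There is, however, a genuine gap in the comparison step as you sketch it: you assume $u\leq v$ on $\partial R$ and treat the maximum point of the penalized function as interior, which reduces everything to the classical whole-space setting. The theorem here concerns the generalized (viscosity-sense) Dirichlet problem: the sub- and super-solution definitions given just before the statement include the $\min$/$\max$ alternatives at points of $\partial R$, so the doubled maximum may land on the boundary with the boundary data ``lost'' there, and dealing with this is the actual technical heart of \cite{EU}. It requires barrier-type test functions built from the distance function $d$, which is precisely what the quantities $p_\eta$, $M_\eta$, $I_{\eta,\delta,r}$ and the strict limsup/liminf inequalities \eqref{limsupinf} invoked in (A6) encode. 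Your reading of (A6) as the device that keeps the nonlocal contribution from absorbing the interior contradiction is therefore a mischaracterization: (A6) is a boundary hypothesis whose role is to exclude loss of the boundary condition \eqref{bcs}, not to close the interior estimate. Without that boundary analysis the comparison principle --- and with it uniqueness and the Perron construction between the given envelopes --- does not follow.
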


\begin{lemma}\label{EUlem}
In particular, the function $$F(x,u,p,X,i):=Mu+ \gamma p -\frac{\beta}{2}X - i$$ satisfies (A1)-(A4)  and our measure $K$ satisfies (A5) in \eqref{EUtheorem}, where 
$$M=\int_{\mathbb{R}_+}K(dy).$$
\end{lemma}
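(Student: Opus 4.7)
The plan is to verify the five conditions separately, since each one reduces to a short computation for the specific linear form of $F$ and the explicit expression for $K$.

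First I would check (A1). For the given $F$, a direct subtraction gives
\begin{equation*}
F(x,u,p,X,i_1)-F(x,u,p,Y,i_2)= -\tfrac{\beta}{2}(X-Y)-(i_1-i_2).
\end{equation*}
Since $\beta$ in \eqref{beta} is defined with an overall minus sign in front of a manifestly non-negative integral, we have $\beta\le 0$, hence $-\beta/2\ge 0$. Therefore when $X\ge Y$ and $i_1\ge i_2$ the right-hand side is $\le 0$, which is (A1). Next, for (A2), the same kind of calculation yields
\begin{equation*}
F(x,u,p,X,i)-F(x,v,p,X,i)=M(u-v),
\end{equation*}
so it suffices to show $M>0$. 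Using $K(dy)=a\log(1+y)\,\nu(dy)$ with $a>0$ and $\log(1+y)>0$ on $\mathbb{R}_+$, this is immediate as long as $\nu$ places mass on $(0,\infty)$, which is part of the standing hypothesis on the jump component of $Z$. Thus (A2) holds with $\gamma=M$.

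For (A3) and (A4) the verification is essentially bookkeeping: $F$ is affine in $(u,p,X,i)$ with constants $M,\gamma,-\beta/2,-1$ that do not depend on $x$, so it is globally Lipschitz and hence uniformly continuous in all arguments, giving (A3). Evaluating at zero gives $F(x,0,0,0,0)=0$ identically in $x$, so $\sup_x|F(x,0,0,0,0)|=0<\infty$, which is (A4).

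The only step requiring actual work is (A5), which asks that $K$ be a L\'evy--It\^o measure, i.e.\ that $\int_{\mathbb{R}_+}(1\wedge y^2)\,K(dy)<\infty$. Splitting the integral at $y=1$ and using $\log(1+y)\sim y$ as $y\to 0^+$, the small-jump part is dominated by a constant multiple of $\int_0^1 y^2\,\nu(dy)$, which is finite because $\nu$ is a L\'evy measure. The large-jump part reduces to $\int_1^{\infty}\log(1+y)\,\nu(dy)$, and here I would invoke Assumption \ref{a2}, which gives some exponential integrability of $Z^{(e)}_1$ and hence, via the L\'evy--Khintchine formula, control of $\int_1^{\infty}y\,\nu(dy)$; since $\log(1+y)\le y$ on $[1,\infty)$, this bounds the large-jump piece. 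I expect this last estimate to be the only part of the argument where one truly uses the structure beyond linearity, and so the main (minor) obstacle is making sure the logarithmic weight $\log(1+y)$ is controlled at infinity by the same moment hypothesis that underlies the rest of the paper's framework.
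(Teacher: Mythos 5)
Your overall route is the same as the paper's: verify (A1)--(A4) by direct computation exploiting the linearity of $F$, and check (A5) from the explicit form of $K$. In fact your treatment of (A2)--(A5) is more detailed than the paper's, which simply picks $\gamma=M>0$, appeals to linearity and the absence of explicit $x$-dependence, and asserts (A5) in one line ``by the assumptions of the underlying L\'evy process''; your splitting of $\int_{\mathbb{R}_+}(1\wedge y^2)K(dy)$ at $y=1$ with $\log(1+y)\sim y$ near $0$ and $\log(1+y)\le y$ at infinity is exactly the computation the paper leaves implicit.

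Two points need repair. First, your (A1) step is a non sequitur as written: from $\beta\le 0$ you get $-\tfrac{\beta}{2}\ge 0$, so for $X\ge Y$ the term $-\tfrac{\beta}{2}(X-Y)$ is $\ge 0$, and the difference $-\tfrac{\beta}{2}(X-Y)-(i_1-i_2)$ need not be $\le 0$; degenerate ellipticity requires the coefficient of $X$ in $F$ to be nonpositive, which your premise $\beta\le 0$ contradicts rather than supplies. The resolution is that the second-order coefficient should be read as $-\tfrac{\beta^2}{2}$, matching the $\tfrac12\beta^2\xi''$ term of the generator in Theorem \ref{jumpgen} (the lemma's statement, and the paper's own equally terse proof of (A1), are sloppy on this $\beta$ versus $\beta^2$ point); with that reading (A1) is immediate, and your computation goes through once the sign of the $X$-coefficient is fixed. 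Second, for (A5) you invoke Assumption \ref{a2}, but that assumption concerns the cumulant of the effective subordinator $Z^{(e)}$ in the BN-S model of Section \ref{sec12}, not the L\'evy measure $\nu$ of Section \ref{ch:generalization}. The integrability you need at infinity, $\int_{y>1}\log(1+y)\,\nu(dy)\le\int_{y>1}y\,\nu(dy)<\infty$, is instead part of the standing setup of Section \ref{ch:generalization}: it is precisely what makes $m$ in \eqref{m} (and hence the generator of Theorem \ref{jumpgen}) finite. So the bound you want is available, but from a different source than the one you cite. With these two corrections your argument coincides with the paper's.
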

\begin{proof}
The proof is in the appendix.
\end{proof}

It is known that if $(X_t)_{t \geq 0}$ is a L\'evy process then there exists a unique c\'adl\'ag process $(Z_t)_{t \geq 0}$ such that 
$$dZ_t= Z_{t-}\, dX_t, \quad Z_0=1.$$
$Z$ is called the stochastic exponential or Dol\'eans-Dade exponential of $X$ and is denoted by $Z= \mathcal{E}(X)$. We now derive the infinitesimal generators. The results are motivated by \cite{Roberts}. Before proceeding, we present another formal definition:
\begin{definition}
We write that a function $f(x)=O(g(x))$  if we have some $M,\epsilon \in \R$ satisfying $|f(x)|\leq \epsilon g(x)$ for all $x>M$. Similarly, we write that a function $f(x)=o(g(x))$ if for any $\epsilon>0$, we have some $M \in \R$ satisfying $|f(x)|\leq \epsilon g(x)$ for all $x>M$.

The norm $\| f \|_\infty$ is defined as the essential supremum of the absolute value of $f$ over $\Omega$. It is the smallest number so that $\{ x : |f(x)|\geq \|f\|_\infty\}$ has measure zero.
\end{definition}

We state the additional limit assumptions on $F$ from \cite{EU}: 
\begin{align} \liminf_{y\to x, y\in \bar{\Omega}, \eta \downarrow 0, d(y)\eta^{-1} \to 0} \left[ \sup_{0<\delta \in [d(y),r)} \inf_{s\in [-R,R]} F(y,s,p_\eta(y),M_\eta(y),I_{\eta,\delta,r}(y))\right]&<0, \nonumber \\ 
\limsup_{y\to x, y\in \bar{\Omega}, \eta \downarrow 0, d(y)\eta^{-1} \to 0} \left[ \inf_{0<\delta \in [d(y),r)} \sup_{s\in [-R,R]} F(y,s,-p_\eta(y),-M_\eta(y),-I_{\eta,\delta,r}(y))\right]&<0, \label{limsupinf}\end{align}
where
\begin{equation*}
p_\eta(y) =O(\epsilon^{-1})+\frac{k_1+o(1)}{\eta} Dd(y),
\end{equation*}
\begin{equation*}
M_\eta(y)= O(\epsilon^{-1})+\frac{k_1+o(1)}{\eta} D^2d(y)-\frac{k_2+o(1)}{\eta^2} Dd(y) \otimes Dd(y),
\end{equation*}
\begin{align*}
I_{\eta,\delta,r}(y)&=-\nu I_{\delta,r}^{\hbox{ext},1}(y)+2\|u\|_\infty I_{\beta(\nu),r}^{\hbox{int},1}(y)\\
&-\frac{k_1+o(1)}{\eta} \left(I^{\hbox{tr}}(y)+I_{\beta(\eta),r}^{\hbox{int},2}(y)+I_{\delta,r}^{\hbox{ext},2}(y)-\|D^2d\|_\infty I^4_{\delta,\beta(\eta),r}(y)\right)\\
&+O(\epsilon^{-1})\left( 1+ o(1)I_{\beta(\eta),r}^{\hbox{int},3}(y)+o(1)I_{\delta,r}^{\hbox{ext},3}(y) \right),
\end{align*}
with $O(\epsilon^{-1})$ not depending on $k_1$ nor $k_2$, and
\begin{equation*}
\mathcal{A}_{\delta,\beta,r}(x) :=\{z\in B_r: -\delta\leq d(x+z)-d(x)\leq \beta\},
\end{equation*}
\begin{equation*}
\mathcal{A}^{\hbox{ext}}_{\delta,r}(x):=\{z\in B_r:  d(x+z)-d(x)< -\delta\},
\end{equation*}
\begin{equation*}
\mathcal{A}^{\hbox{int}}_{\beta,r}(x):=\{z\in B_r:  d(x+z)-d(x)>\beta\},
\end{equation*}
\begin{equation*}
I^{\hbox{ext},1}_{\delta,r}(x) := \int_{\mathcal{A}^{\hbox{ext}}_{\delta,r}(x)} d\mu_x(z),
\end{equation*}
\begin{equation*}
I^{\hbox{ext},2}_{\delta,r}(x) := \int_{\mathcal{A}^{\hbox{ext}}_{\delta,r}(x)} Dd(x)\cdot zd\mu_x(z),
\end{equation*}
\begin{equation*}
I^{\hbox{ext},3}_{\delta,r}(x) := \int_{\mathcal{A}^{\hbox{ext}}_{\delta,r}(x)}|z| d\mu_x(z),
\end{equation*}
\begin{equation*}
I^{\hbox{int},1}_{\beta,r}(x) := \int_{\mathcal{A}^{\hbox{ext}}_{\beta,r}(x)} d\mu_x(z),
\end{equation*}
\begin{equation*}
I^{\hbox{int},2}_{\beta,r}(x) := \int_{\mathcal{A}^{\hbox{ext}}_{\beta,r}(x)} Dd(x)\cdot zd\mu_x(z),
\end{equation*}
\begin{equation*}
I^{\hbox{int},3}_{\beta,r}(x) := \int_{\mathcal{A}^{\hbox{ext}}_{\beta,r}(x)}|z| d\mu_x(z),
\end{equation*}
\begin{equation*}
I^{4}_{\delta,\beta,r}(x) := \frac{1}{2}\int_{\mathcal{A}_{\delta,\beta,r}(x)}|z|^2 d\mu_x(z),
\end{equation*}
\begin{equation*}
I^{\hbox{tr}}(x) := \int_{r<|z|<1} Dd(x)\cdot zd\mu_x(z).
\end{equation*}

Using all of the previous, we can finally state the existence theorem.

\begin{theorem}\label{solution}
If $\xi$ is monotonic, then the partial integro-differential equation $\mathcal{L}\xi=0$, subject to boundary conditions \eqref{bcs} and  $\xi>0$ has a viscosity solution between sub-solution and super-solution
\begin{align*}
g(x)&= \exp(B(x-l))\frac{\sinh \left(\frac{r-x}{\beta} \sqrt{ 2M +B^2}\right)}{\sinh \left(\frac{r-l}{\beta} \sqrt{ 2M +B^2}\right)},\nonumber \\
f(x)&= \frac{\exp(2Br)-\exp(2Bx)}{\exp(2Br)-\exp(2Bl)},
\end{align*}
where
\begin{align*}
C&=\int_0^\infty \frac{y}{1+|y|}K(dy),\nonumber\\
B&= \frac{2 (C+\gamma) }{\beta^2},\nonumber\\ 
M&=\int_0^\infty K(dy).
\end{align*}
\end{theorem}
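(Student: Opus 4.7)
The plan is to invoke Theorem \ref{EUtheorem} with the operator $F(x,u,p,X,i) = Mu + \gamma p - \tfrac{\beta^2}{2}X - i$ from Lemma \ref{EUlem}. Under this identification, $F(\,\cdot\,,\xi,\xi',\xi'',\mathcal I[\xi]) = -\mathcal L\xi$, so that $g$ being a (viscosity) sub-solution of $F=0$ amounts to $\mathcal L g \ge 0$, and $f$ being a super-solution amounts to $\mathcal L f \le 0$. Lemma \ref{EUlem} already provides (A1)--(A5); first I would verify the remaining hypotheses (A6)--(A7): (A7) reduces to standard matrix-monotonicity bookkeeping for our linear $F$, while (A6) is the strict version of the limit inequalities \eqref{limsupinf}, which I would check from the explicit form of $F$ and the positivity of $K$. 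The boundary conditions $g(l)=f(l)=1$ and $g(r)=f(r)=0$ then hold by direct inspection of the formulas.

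For the super-solution $f$, writing $f(x)=A-B'e^{2Bx}$ with $A=e^{2Br}/(e^{2Br}-e^{2Bl})$ and $B'=1/(e^{2Br}-e^{2Bl})$, the identities $f'' = 2Bf'$ and $f(x+y)-f(x)=-B'e^{2Bx}(e^{2By}-1)$ together with the cancellation $\beta^2 B = 2(\gamma+C)$ collapse $\mathcal L f$ to
\[\mathcal L f(x) = -B'e^{2Bx}\Bigl[\,2B(\gamma+C) + \!\int_{\mathbb R_+}(e^{2By}-1)\,K(dy)\Bigr],\]
which is manifestly $\le 0$ since $B'$, $B$, $\gamma+C$, and the integrand are all nonnegative.

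For the sub-solution $g$, set $H(x)=e^{B(x-l)}\cosh(k(r-x))$ with $k=\sqrt{2M+B^2}/\beta$. Direct differentiation yields $g' = Bg - kH$ and $g'' = (B^2+k^2)g - 2BkH$; substituting into $\mathcal L g$ and invoking $\beta^2 B = 2(\gamma+C)$ together with $\beta^2 k^2 = 2M+B^2$ collapses the non-integral part of $\mathcal L g$ to $\tfrac{B^2}{2}g - k(\gamma+C)H$, so the sub-solution inequality reduces to bounding the nonlocal term $\int_{\mathbb R_+} g(x+y)K(dy)$ from below by $k(\gamma+C)H - \tfrac{B^2}{2}g$. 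I would attack this bound using the addition identity $\sinh(k(r-x-y)) = \sinh(k(r-x))\cosh(ky) - \cosh(k(r-x))\sinh(ky)$, which exhibits $g(x+y)$ as $e^{By}$ times a linear combination of $\sinh(k(r-x))$ and $\cosh(k(r-x))$; the tuning $B = 2(\gamma+C)/\beta^2$ is chosen precisely so that the resulting $\cosh$-coefficient dominates the deficit $-k(\gamma+C)H$. The main obstacle is this last step: it demands a careful treatment of the extension of $g$ beyond $r$ (by zero versus by the explicit analytic continuation of the formula) together with integrability of $e^{(B\pm k)y}$ against $K(dy)$, and the monotonicity hypothesis on $\xi$ stated in the theorem is what permits the viscosity framework to absorb the boundary behavior at $x=r$. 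Once this sign check is in place, existence of a viscosity solution sandwiched between $g$ and $f$ follows immediately from Theorem \ref{EUtheorem}.
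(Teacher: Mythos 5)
Your super-solution check for $f$ is fine (it reproduces, by explicit substitution, the sign argument the paper uses), and verifying (A6)--(A7) before invoking Theorem \ref{EUtheorem} is if anything more careful than the paper, which only cites Lemma \ref{EUlem} for (A1)--(A5). The genuine gap is exactly the step you flag and leave open: the sub-solution inequality for $g$. Your route -- expanding $\int_{\mathbb{R}_+} g(x+y)\,K(dy)$ via the addition formula for $\sinh$ and hoping the $\cosh$-coefficient dominates -- is not carried out, and as proposed it would need $\int e^{(B+k)y}K(dy)<\infty$ (an exponential-moment condition on $K$ that is nowhere assumed) and would have to contend with the fact that the analytic continuation of $g$ past $r$ is negative (the $\sinh$ factor changes sign), so the lower bound you want on the nonlocal term is not at all automatic. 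The paper never performs this computation, because $g$ and $f$ are constructed precisely so that it is unnecessary: writing $\mathcal{L}\xi = \tfrac12\beta^2\xi'' -(C+\gamma)\xi' - M\xi + H(x)$ with $H(x)=\int_0^\infty \xi(x+y)K(dy)$ and $M=\int_0^\infty K(dy)$, the function $g$ is by definition the solution of the local ODE $\tfrac12\beta^2\xi''-(C+\gamma)\xi'-M\xi=0$ with the boundary data, so $\mathcal{L}g = H_g \ge 0$ follows from positivity alone (with the obvious nonnegative extension beyond $r$); and $f$ is the solution of $\tfrac12\beta^2\xi''-(C+\gamma)\xi'=0$, so $\mathcal{L}f = \int_0^\infty\bigl(f(x+y)-f(x)\bigr)K(dy) \le 0$ follows from monotonicity and positivity. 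No addition identities, no moment conditions on $K$, no analysis of the continuation of $g$ are needed.

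A second, smaller point: you misattribute the role of the monotonicity hypothesis. In the paper it is used to obtain the super-solution sign $\int_0^\infty\bigl(\xi(x+y)-\xi(x)\bigr)K(dy)\le 0$ (i.e.\ $H-M\xi\le 0$), not to help the viscosity framework handle the boundary at $x=r$. So your proposal is structurally aligned with the paper (sandwich between explicit $g$ and $f$, then apply Theorem \ref{EUtheorem}), but the one step you identify as the "main obstacle" is exactly the step the paper's construction renders trivial, and without either adopting that sign argument or supplying the missing estimate and integrability hypotheses, your argument for $g$ does not close.
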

\begin{proof}
The proof is in the appendix.
\end{proof}

\begin{remark}
The existence of a more general viscosity solution to a higher dimensional problem is shown without assuming monotonicity in the paper \cite{Roberts}. The monotonicity assumption yields a tighter super- and sub-solution envelope and is here to make the application of this theorem to time series data more effective.
\end{remark}

\subsection{Jump size detection algorithm}
We will use the prior super- and sub-solutions as envelopes to approximate an important parameter in the following algorithm that uses the previous hypothesis test to classify L\'evy processes as having small or large jumps.

Given oil price close values in length-$n$ work day periods, we do the following: 

\begin{enumerate}
\item An inverse Gaussian density $\nu(dx)$ is fit to the distribution of negative percent daily jumps for the entire (training) data set. 

\item We then fit the density of the L\'evy measure from \ref{jumpgen}, $\nu^*(dx)=(1+ax)\nu(dx)$, to the distribution of the negative percent daily jumps for the $n$-length period. This gives a test statistic $a$ for the parameter in the hypothesis test.

\item We calculate the standard deviation $\sigma$ of all daily percent changes for the $n$-length period.

\item Using the density $\nu^*$ and standard deviation $\sigma$, we calculate $\gamma$, $\beta$, and $C$ from \ref{jumpgen} and \ref{solution}.

\item The left side of the interval is chosen to be $-1$, then using $a$, $\sigma$, $\beta$, $\gamma$, and $C$ in the super- and sub-solution equations in \ref{solution}, we can solve for the right side of the interval using $f(0)=1-\alpha_0$ and $g(0)=1-\alpha_0$, and take the average of the two. The parameter $\alpha_0$ is chosen to be the maximum desired probability of a Type-I Error.

\item Simulations of the log-likelihood process with drift $\gamma$, volatility $\beta$, and jumps represented by an inverse Gaussian process with expected value $-t \int_0^\infty x K(dx)$ at time $t$, are run. We record the frequency of exits out of the right-side of the interval to get a number that represents, relatively, the size of the jumps. We call this number the right-exit frequency.

\item Time periods whose right-exit frequencies are at or above a certain threshold $p^*$ are then classified as having large jumps, while the others are classified as having small jumps.

\end{enumerate}

\subsection{Effectiveness on simulated data}
To demonstate the capacity of the hypothesis testing algorithm in distinguishing between processes with small and large jumps, we run it on simulated data. Multiple classes of L\'evy processes are simulated, all of which start initially at $100$:
\begin{enumerate}
\item a training data time series with drift $1$, diffusion $0.5$, and jumps that follow an inverse Gaussian distribution with mean $1$ and scale factor $1$, which gives a L\'evy measure $\nu$,
\item a control data set of $100$ processes with parameters identical to the training data,
\item a data set of $100$ processes with obvious large negative jumps: the parameters are the same as the training set except the L\'evy measure is now represented by $(1+x)\nu(dx)$, and
\item a data set of $100$ processes with subtle large negative jumps: the drift is increased to $3$ compensate for the previous increase in jump size.
\end{enumerate}
The training time series is run for $500$ time periods, and the other three are run for $30$ each, with representatives shown in Figure \ref{simprocs}.

\begin{figure}
\centering
\caption{The training data and a representative from each other data set.}
\label{simprocs}
\includegraphics[scale=.35]{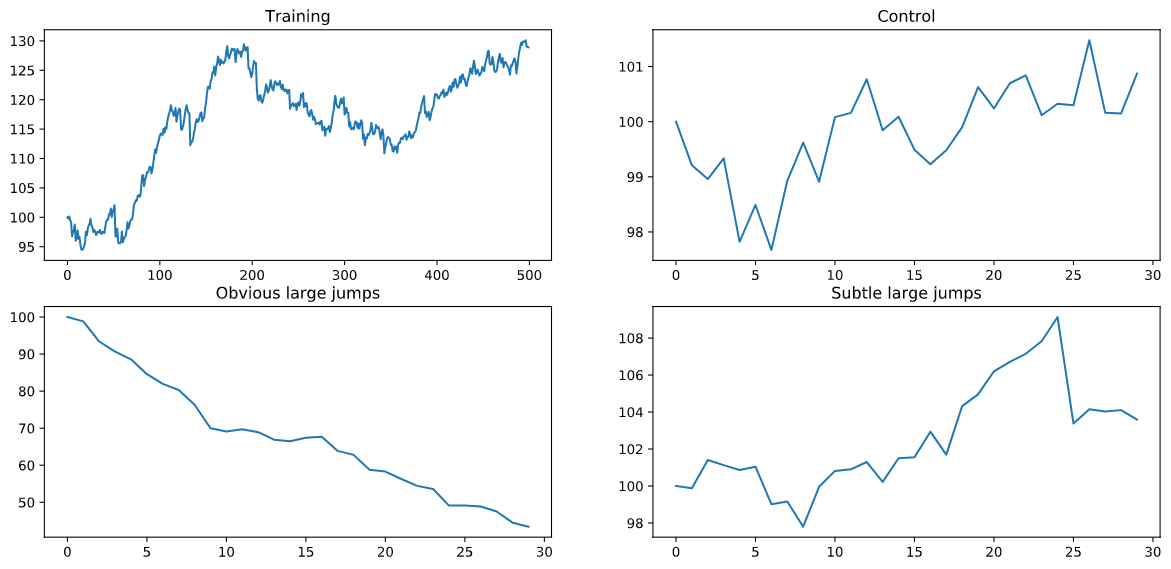}
\end{figure}

 The hypothesis test algorithm with $p^*=8$ and $\alpha_0=0.1$ is run on each data set. (The parameter $p^*$ is chosen here to gives desirable results and will be used in the application in the next section.) For the control, $79$ out of the $100$ processes are correctly identified as coming from the distribution with small jumps. This is to be expected because some number of processes would randomly have signficantly larger jumps just by chance. All $100$ from the obvious large jumps set are identified as having large jumps, and $85$ out of the $100$ processes in the subtle large jumps set are correctly identified.

Alternatively, a na\"ive approach of simply classifying each $30$ day period based on comparing only the mean jump size relative to the training data's mean jump size results in only $28$ of the $100$ control processes being correctly identified; although it correctly identifies all but $3$ of the large jump simulations. Because of the significant potential for Type-I Error in this na\"ive approach, the hypothesis test algorithm has evident advantages.

\section{Prediction method}
\label{dataanalysissetup}

We briefly discussed the data set in Section \ref{sec12}. In this section, we present an overview of the data set in its entirety, and then develop two procedures used in the predictive classification problem. As discussed in Section \ref{sec12}, we consider the West Texas Intermediate (WTI or NYMEX) crude oil prices data set for the period June 1, 2009 to May 30, 2019.  West Texas Intermediate  crude oil is described as light sweet oil traded and delivered at Cushing, Oklahoma. WTI usually refers to the price of the New York Mercantile Exchange (NYMEX) WTI Crude Oil futures contract. For WTI, spot and futures prices are used as a benchmark in oil pricing. The WTI crude oil futures contract specifies the deliverable asset for the contract to be a blend of crude oil, as long as it is of acceptable lightness and sweetness.  The data set is available online in \cite{datasource}. We index the available dates from 0 (for June 1, 2009) to 2529 (for  May 30, 2019).

The following table (Table 1) summarizes various estimates for the data set.

\begin{table}[H]
\centering
\caption{Properties of the empirical data set.}
  \begin{tabular}{ | l | c | r |}
    \hline
    & Daily Price Change & Daily Price Change \% \\ \hline
    Mean & -0.0047 & 0.01370 \% \\ \hline
    Median & 0.04399 &  0.06521 \%\\     \hline
Maximum & 7.62 &  12.32 \%\\     \hline
Minimum & -8.90 &  -10.53 \%\\     \hline
  \end{tabular}
\end{table}

In Figure \ref{closehist} the distribution plot for close oil price is provided.  Histograms for daily change in close oil price and daily change percentage in close oil price are provided in Figure \ref{closehistchange} and Figure \ref{dailypchist}, respectively, for exploratory purposes.

\begin{figure}[H]
\centering
\caption{Distribution plot for close oil price.}
\label{closehist}
\includegraphics[scale=.4]{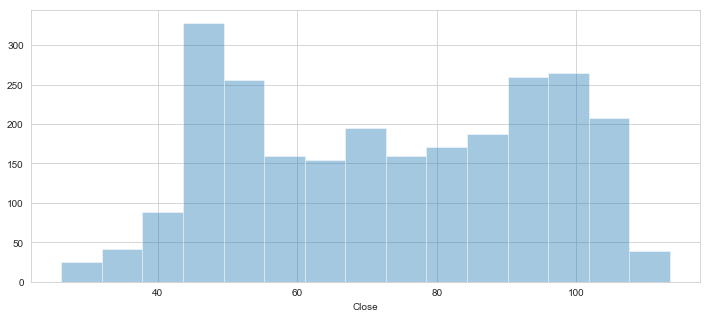}
\end{figure}

\begin{figure}[H]
\centering
\caption{Histogram for daily change in close oil price.}
\label{closehistchange}
\includegraphics[scale=.7]{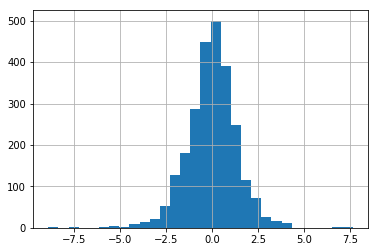}
\end{figure}
\begin{figure}[H]
\centering
\caption{Histogram for daily change percentage in close oil price.}
\label{dailypchist}
\includegraphics[scale=.7]{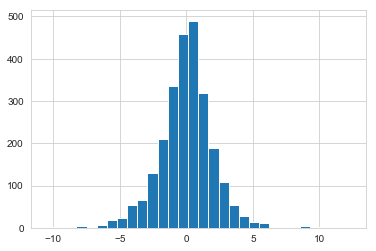}
\end{figure}

In the following subsections, two procedures are described for constructing the related classification problem. The procedures differ in the features used for the analysis:  percent daily changes and right-exit frequencies. In each, the algorithm at the end of Section \ref{ch:generalization} is used to determine whether an individual time period has large or small jumps, represented by the right-exit frequency of that time period. The machine learning algorithms are then used to predict whether the right-exit frequency of the next time period will be large or small. Consequently, before truncation, the resulting probabilities of large jumps from each machine learning algorithm can be used to update $\theta$ from the refined BN-S model in Section \ref{sec12} each period.

\subsection{Percent daily changes as features}\label{percentatt}
We implement the following procedure to create a machine learning classification problem:

\begin{enumerate}
\item We consider the percent daily changes for the historical oil price data and create a new data-frame from the old where the columns will be $n$ consecutive daily change percents. For example, if the changes are 
$$a_1,a_2,a_3,...,$$
then the first row of the data set will be 
$$a_1,a_2,...,a_n,$$
and the second row will be 
$$a_2,a_3,...,a_{n+1},$$
and so forth. \label{newdf}

\item We create a target column that is $0$ if the right-exit frequency of the next disjoint $n$ days is less than some threshold $p^*$, and is $1$ otherwise. For example, if the time period $$a_{1+n},a_{2+n},...,a_{2n-1}$$ has a significant frequency of right-exits, then the time period $$a_1,a_2,...,a_n$$ is given a target value $1$.

\item We run various classification algorithms where the input is a list of $n$ consecutive close prices, and the output is a $1$ to represent large jumps or $0$ to represent small jumps of the next $n$ consecutive close prices. Classification reports and confusion matrices are evaluated for each algorithm.

\end{enumerate}

\subsection{Right-exit frequencies as features}\label{rightatt}
We implement the following procedure to create a machine learning classification problem:

\begin{enumerate}
\item Similar to the previous, we consider close prices for the historical oil price data and create a new data-frame exactly as before. 

\item A new column is created that holds the right-exit frequencies for each consecutive set of $n$ days, say, 
$$b_1,b_2,b_3,....$$ 
These represent how large the jumps in close prices are for the previous $n$ days.

\item From this column, a new staggered data-frame is created, similar to before.

\item Finally, a target column is created: if the row is 
$$b_{30},b_{31},...,b_{30+n-1},$$
then the entry in the target column will be $b_{30+2n-1}$. This is the right-exit frequency of the next disjoint $n$-day period.

\item We run various classification algorithms where the input is a list of $n$ consecutive right-exit frequencies, and the output is $1$ to represent large jumps or $0$ to represent small jumps of the next $n$ consecutive close prices. Classification reports and confusion matrices are evaluated for each algorithm.
\end{enumerate}


\subsection{Numerical results}
\label{numericalresults}

Now we apply the procedures described in the last section to specific cases. For this section, the period length $n=30$. Further, $\alpha$, the parameter representing an approximation for the Type-I Error of the test is chosen to be $\alpha=0.9$, and because it worked optimally in the simulation study, the cut off for significant right-exit frequencies is chosen to be $p^*=8$. Two different time periods are used for training, and two are used for testing. The time periods are 
\begin{itemize}
\item $T_1$: \emph{training data (index)}: October 21, 2009 (100) to May 17, 2013 (1000); and \emph{testing data (index)}: April 21, 2017 (2000) to April 10, 2019 (2500); 
\item $T_2$: \emph{training data (index)}: August 11, 2009 (50) to May 13, 2013 (1500); and \emph{testing data (index)}: October 5, 2015 (1600) to January 29, 2019 (2450).

\end{itemize}

 Because the data is significantly imbalanced in favor of small jump time periods, random small jump periods from the training data are removed while performing algorithms \ref{percentatt} and \ref{rightatt}. This allows the neural nets and other algorithms to isolate the attributes of large and small jump periods without becoming distracted by the imbalanced frequency of small jump periods. Without doing so, the algorithms often predict all time periods to be small jump periods -- simply because those are more prevalent. The results of the machine learning algorithms using the time periods above are recorded in the following tables (Tables 2-5).  Those used are linear regression (LR), decision trees (DT), random forests (RF), and three different types of neural nets, (A) a standard net, (B) a long-short term memory net, and (C) a LSTM net with a batch normalizer.

Most of the machine learning algorithms perform better than how one might expect from guessing uniformly whether the next time period would have big jumps. Some perform notably poorly, however, particularly the LSTM neural nets without a batch normalizer. However, the neural nets with a batch normalizer consistently perform quite well.


Figure \ref{refhist} provides a histogram showing the distribution of right-exit frequencies for period lengths of 30 business days in the $T_2$ testing data. For each set of 30 consecutive days, $10$ simulations are run, and the frequency of right-exits is recorded. The $x$-axis in the figure is the number of simulated processes that exit to the right of the testing interval for a given period, while the $y$-axis is the number of 30 day periods with that frequency of right-exits.

Once the value of $\theta$ is estimated, this can be implemented in the refined BN-S model \eqref{2new} (and, \eqref{4new}, with $\theta=\theta'$). Equipped with $\theta$, as described in \cite{recent} and as shown in Theorem \ref{big12222}, the refined BN-S stochastic model can be used to incorporates \emph{long-range dependence} without actually changing the model. In addition, this shows a real-time application of data science for extracting a \emph{deterministic component} out of processes that are thus far considered to be completely stochastic. By the \emph{deterministic component}, it is meant that $\theta$ is a deterministic signal. This is deterministic in the sense that its value is extracted from the data before the model is implemented. Once the value of $\theta$ is obtained, it is kept constant for a certain period of time. For the computational effectiveness of $\theta$, the results in Tables 2-5 show better estimation compared to the benchmark study in \cite{recent}. 

\begin{table}[H]
\caption{Various estimations for $T_1$, using daily percent changes as features. }
  \begin{tabular}{ | c | c | c |c|c|c| c|}
   \hline
                                        & LR     & DT &    RF  & Neural Network (A) & LSTM (B) & BN (C) \\ \hline
    precision $\theta=0$ & 0.92 &  0.89 & 0.89  & 0.88 & 0.83 &  0.93 \\ \hline
    recall       $\theta=0$ & 0.61& 0.56  & 0.64 &0.54 & 0.77 & 0.88 \\     \hline
    f1-score  $\theta=0$ & 0.74 & 0.69  & 0.74 &0.67  &0.80 & 0.90 \\     \hline
    support   $\theta=0$ & 340  & 340   & 340  & 340   & 340 &  340  \\     \hline
    precision $\theta=1$ & 0.29 & 0.24 & 0.25 & 0.22 & 0.19 & 0.53  \\ \hline
    recall      $\theta=1$ & 0.76 & 0.67  & 0.60 & 0.64 & 0.26 & 0.66 \\     \hline
   f1-score  $\theta=1$ & 0.42 & 0.35  & 0.36 & 0.33 & 0.22 & 0.59 \\     \hline
   support   $\theta=1$ & 70 & 70   & 70  & 70  & 70 & 70 \\     \hline
\end{tabular}
\end{table}

\begin{table}[H]
\caption{Various estimations for $T_1$, using right-exit frequencies as features. }
  \begin{tabular}{ | c | c | c |c|c|c| c|}
   \hline
                                        & LR     & DT &    RF  & Neural Network (A) & LSTM (B) & BN (C) \\ \hline
    precision $\theta=0$ & 0.89 &  0.92 & 0.87  & 0.83 & 0.88 &  0.85 \\ \hline
    recall       $\theta=0$ & 0.66 & 0.67  & 0.69 &0.56 & 0.21 & 0.71 \\     \hline
    f1-score  $\theta=0$ & 0.76 & 0.77  & 0.77 &0.67  &0.34  & 0.77 \\     \hline
   support   $\theta=0$ & 340  & 340   & 340  & 340   & 340 &  340  \\     \hline
    precision $\theta=1$ & 0.24 & 0.30 & 0.25 & 0.17 & 0.18 & 0.21  \\ \hline
    recall      $\theta=1$ & 0.61 & 0.70  & 0.50 & 0.43 & 0.86 & 0.39 \\     \hline
   f1-score  $\theta=1$ & 0.37 & 0.42  & 0.33 & 0.24 & 0.30 & 0.27 \\     \hline
   support   $\theta=1$ & 70 & 70   & 70  & 70  & 70 & 70 \\     \hline
\end{tabular}
\end{table}

\begin{table}[H]
\caption{Various estimations for $T_2$, using daily percent changes as features. }
  \begin{tabular}{ | c | c | c |c|c|c| c|}
   \hline
                                        & LR     & DT &    RF  & Neural Network (A) & LSTM (B) & BN (C) \\ \hline
    precision $\theta=0$ & 0.79 &  0.74 & 0.79  & 0.80 & 0.77 &  0.75 \\ \hline
    recall       $\theta=0$ & 0.82 & 0.50  & 0.57 &0.66 & 0.58 & 0.91 \\     \hline
    f1-score  $\theta=0$ & 0.80 & 0.59  & 0.66 &0.72  &0.66 & 0.82 \\     \hline
    support   $\theta=0$ & 519& 519& 519& 519& 519& 519  \\     \hline
    precision $\theta=1$ & 0.57 & 0.37 & 0.42 & 0.47 & 0.41 & 0.65  \\ \hline
    recall      $\theta=1$ & 0.53 & 0.63  & 0.68 & 0.63 & 0.63 & 0.37 \\     \hline
   f1-score  $\theta=1$ & 0.55 & 0.46  & 0.52 & 0.54 & 0.50 & 0.47 \\     \hline
   support   $\theta=1$ &241&241&241&241&241&241 \\     \hline
\end{tabular}
\end{table}

\begin{table}[H]
\caption{Various estimations for $T_2$, using right-exit frequencies as features. }
  \begin{tabular}{ | c | c | c |c|c|c| c|}
   \hline
                                        & LR     & DT &    RF  & Neural Network (A) & LSTM (B) & BN (C) \\ \hline
    precision $\theta=0$ & 0.80 &  0.76 & 0.79  & 0.76 & 0.80 &  0.75 \\ \hline
    recall       $\theta=0$ & 0.54 & 0.56  & 0.58 &0.54 & 0.16 & 0.63\\     \hline
    f1-score  $\theta=0$ & 0.65 & 0.64  & 0.67 &0.63  &0.27  & 0.68 \\     \hline
   support   $\theta=0$ & 519& 519& 519& 519& 519& 519  \\     \hline
    precision $\theta=1$ & 0.42 & 0.39 & 0.42 & 0.39 & 0.34 & 0.40  \\ \hline
    recall      $\theta=1$ & 0.70& 0.62  & 0.66 & 0.64 & 0.91& 0.54 \\     \hline
   f1-score  $\theta=1$ & 0.52 & 0.48  & 0.52 & 0.49 & 0.49 & 0.46 \\     \hline
   support   $\theta=1$ &241&241&241&241&241&241 \\     \hline
\end{tabular}
\end{table}

\begin{figure}[H]
\centering
\caption{Histogram for daily (previous 30 days) right-exit frequencies.}
\label{refhist}
\includegraphics[scale=.6]{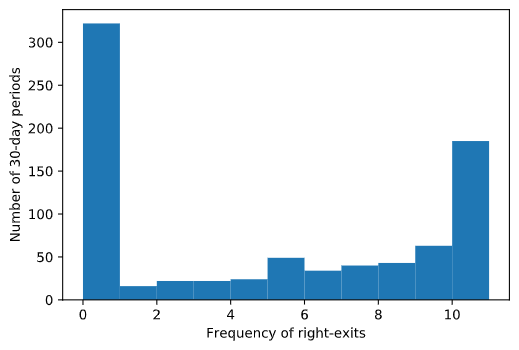}
\end{figure}

\section{Conclusion}
\label{conclusion}

Motivated by the fact that the refined BN-S model can be successfully implemented to the analysis of crude oil price, and that the parameters of the refined BN-S model can be estimated by using various machine/deep learning algorithms, in this paper we study the refined BN-S model from the sequential hypothesis testing perspective, with an application to the oil market.

Mathematical modeling of oil price data is directly inspired by various stochastic models. Thorough understanding and theoretical development of appropriate stochastic models contribute to a better understanding of the risk-management problem of various commodities, and various existing algorithms in a financial market depend on the underlying statistical model. Consequently, an improvement in the underlying model directly improves the existing algorithms. In this paper, a sequential decision making problem in connection to the L\'evy process is studied to analyze the jump size distribution. This is coupled with various machine and deep learning techniques to improve the existing stochastic models.  Consequently, the analysis presented in this paper provides a necessary mathematical framework for an appropriate generalization of various stochastic models.

Future works related to this topic should definitely include seeking to find a more adequate approximation for the right side of the decision rule interval. This would greatly increase the sensitivity of the algorithms in classifying large-jump time periods, thereby requiring less computational power for even better results. Applications to other data sets more independent of exogenous forces, and even across multiple streams of data using \cite{Roberts}, should also be explored. Finally, constructing decision rules for hypothesis tests on other parameters in the underlying processes could open up this type of analysis to more generalized scenarios. \\

\section{Appendix}
Proof of Theorem \ref{th1}:
\begin{proof}
We obtain from equation \eqref{2new} 
$$X_T= \zeta + \beta \sigma_{I}^{2}+ \int_t^T \sigma_s\, dW_s + \rho \int_t^T dZ^{(e)}_{\lambda s},$$
where $\zeta= X_t + \mu(T-t)$. Let $\mathcal{G}$ denote the $\sigma$-algebra generated by $Z^{(e)}$ up to time $T$ and by $\mathcal{F}_t$. Then, proceeding by iterated conditional expectations, we obtain
\begin{align*}
\phi(z) & = E^{\mathbb{P}}[\exp(zX_T)| \mathcal{F}_t] \\
& = E^{\mathbb{P}}\left[ E^{\mathbb{P}}\left[\exp(z(\zeta + \beta \sigma_{I}^{2}+ \int_t^T \sigma_s\, dW_s + \rho \int_t^T dZ^{(e)})) | \mathcal{G}\right] | \mathcal{F}_t \right] \\
& = E^{\mathbb{P}} \left[ \exp (z(\zeta + \beta \sigma_{I}^{2} + \rho \int_t^Td Z^{(e)}_{\lambda s})) E^{\mathbb{P}}\left[\exp(z \int_t^T \sigma_s\, dW_s)| \mathcal{G} \right] | \mathcal{F}_t\right] \\
& = E^{\mathbb{P}}\left[ \exp\left(z(\zeta + \beta \sigma_{I}^{2} + \rho \int_t^T dZ^{(e)}_{\lambda s}) + \frac{1}{2}\sigma_{I}^{2}  z^2 \right) | \mathcal{F}_t \right].
\end{align*}
Using \eqref{6} we obtain
\begin{equation*}
\phi(z)= \exp\left( \zeta z + \frac{1}{2}\epsilon(t, T) \sigma_t^2(z^2+ 2\beta z) \right) E^{\mathbb{P}} \left[\exp\left( \int_t^T \left(\rho z + \frac{1}{2}(z^2+ 2 \beta z) \epsilon(s, T) \right)\, dZ^{(e)}_{\lambda s} \right) \right].
\end{equation*}

Clearly if $z \in \mathcal{S}$, then $\Re (\rho z + \frac{1}{2}(z^2+ 2 \beta z) < \hat{\theta}$. Thus the result follows from \eqref{8}. 
\end{proof}

Proof of Theorem \ref{jumpgen}:
\begin{proof}
Since $z$ is a L\'evy process with characteristics $(\mu, \sigma^2, \nu)$ under $P_0$ and characteristics $(\mu, \sigma^2, (1+ax)\nu)$ under $P_1$, we apply the generalized Girsanov's Theorem. Using $\beta$ as in \eqref{beta}, we obtain
$$\frac{dP_0}{dP_{1}}=\mathcal{E}\left(-N.\right)_t,$$
where
$$N_t=\beta W_t +\int_0^t\int_{x>0} ax (J-\nu)(ds,dx),$$
$aJ$ is the jump measure for $N$, $W$ is a standard Brownian motion, and $\mathcal{E}$ is the Dol\'eans-Dade exponential, defined below. This gives that $N_t$ is a L\'evy process with characteristics $$(  m, \beta^2,-a\nu).$$ Then, by \cite{Tank} (Proposition 8), we obtain characteristics $$(\gamma,\beta^2,-K),$$ for $u_t$. Finally, by \cite{Tank, AS}, the process has the stated generator.\end{proof}

Proof of Lemma \ref{EUlem}:
\begin{proof}
First, consider (A1): 
$$F(x,u,p,X,i_1)- F(x,u,p,Y,i_2) =  \frac{\beta}{2}(Y-X) +i_2- i_1 \geq0$$
if $i_2\leq i_1$ and $Y\leq X$.

Next, $F(x,u,p,X,i)- F(x,v,p,X,i)=M(u-v)$, so choosing $\gamma=M >0$, we have property (A2).

Property (A3) is satisfied because $F$ is linear in each argument, and (A4) is satisfied because $F$ does not depend on its first argument explicitly. Last, $K$ is a L\'evy-It\^o measure by the assumptions of the underlying L\'evy process. 

\end{proof}

Proof of Theorem \ref{solution}:
\begin{proof}
We define
\begin{align*}
H(x)&=\int_{0}^{\infty} \xi(x+y) K(dy), \nonumber \\
M&=\int_0^\infty K(dy).
\end{align*}
Consequently,
\begin{align}
0=\mathcal{L}\xi(x)&= -\gamma \xi'(x) +\frac{1}{2} \beta^2 \xi''(x) +\int_{\mathbb{R}_+} \left(\xi(x+y)-\xi(x)-\frac{y \xi'(x)}{1+|y|}\right)K(dy)\nonumber
\end{align}
can be rewritten as
\begin{align}
0=& -\gamma \xi'(x) +\frac{1}{2} \beta^2 \xi''(x) +H(x)-M \xi(x)-C \xi'(x). \nonumber
\end{align}

The sign on $H$ is positive; therefore, we have sub-solution equation
$$0= \frac{1}{2} \beta^2 \xi''(x) -(C+\gamma) \xi'(x) -M \xi(x).$$
On the other hand, since $\xi$ is monotonic and positive inside $R$,
$$\xi(x+y)-\xi(x)\leq 0 \iff H(x)-M\xi(x)\leq 0.$$
Using this, we have super-solution equation
$$0= -\gamma \xi'(x) +\frac{1}{2} \beta^2 \xi''(x) -C \xi'(x). $$

Finally, applying the previous theorem \ref{EUtheorem}, we have the existence of a viscosity solution.
\end{proof}

\textbf{Acknowledgment}: The authors would like to thank the anonymous reviewers for their careful reading of the manuscript and for suggesting points to improve the quality of the paper.

\end{document}